\newcommand{\balign}[1]{\begin{align} #1 \end{align}}
\newcommand{\fref}[1]{Fig.~\ref{fig:#1}}
\newcommand{\Fref}[1]{Figure~\ref{fig:#1}}
\newcommand{\average}[1]{\ensuremath{\langle#1\rangle} }
\newcommand{\eref}[1]{Eq.~\eqref{#1}}
\newcommand\mtr{T_{X\to Y}}
\newcommand\str{\overline{T}_{X\to Y}}
\newcommand{\del}{\partial}
\theoremstyle{plain}
\newtheorem{thm}{Theorem}
\newtheorem{cor}{Corollary}
\begin{document}

\preprint{APS/123-QED}

\title{Role of sufficient statistics in stochastic thermodynamics \\and its implication to sensory adaptation}

\author{Takumi Matsumoto}
\author{Takahiro Sagawa}%
\affiliation{%
 Department of Applied Physics, The University of Tokyo, 7-3-1 Hongo, Bunkyo-ku, Tokyo 113-8656, Japan
}%


\date{\today}

\begin{abstract}
A sufficient statistic is a significant concept in statistics, which means a probability variable that has sufficient information required for an inference task. We investigate the roles of sufficient statistics and related quantities in stochastic thermodynamics. Specifically, we prove that for general continuous-time bipartite networks, the existence of a sufficient statistic implies that an informational quantity called the sensory capacity takes the maximum. Since the maximal sensory capacity imposes a constraint that the energetic efficiency cannot exceed one-half, our result implies that the existence of a sufficient statistic is inevitably accompanied by energetic dissipation. We also show that, in a particular parameter region of linear Langevin systems, there exists the optimal noise intensity, at which the sensory capacity, the information-thermodynamic efficiency, and the total entropy production are optimized at the same time. We apply our general result to a model of sensory adaptation of {\it E. Coli}, and  find that the sensory capacity is nearly maximal with experimentally realistic parameters.
\end{abstract}

\pacs{Valid PACS appear here}
\maketitle



\section{\label{sec:int}Introduction}
In recent years, stochastic thermodynamics of small systems such as biomolecules have actively been studied both theoretically and experimentally~\cite{Seifert2012,Ciliberto2017}, where effects of thermal fluctuations are not negligible. The entropy of a small system can be reduced by feedback control based on measurements at the level of thermal fluctuations, which is a modern formulation of Maxwell's demon. Thermodynamics of information has rapidly been developed in this decade, and such a research direction is referred to as information thermodynamics~\cite{Parrondo2015}. 

In particular, thermodynamics of autonomous measurement and feedback has been developed~\cite{Allahverdyan2009,Sagawa2012,Ito2013,Hartich2014,Horowitz2014(2),Horowitz2014,Prokopenko2015,Shiraishi2015(2),Shiraishi2015,SMS2016,Yamamoto2016,Rosinberg2016,Hartich2016,phd} and applied to biological systems~\cite{Barato2013,phd,Barato2014,Sartori2014,Itoecoli,Sartori2015,Hartich2016,Ouldridge2017(PRL),Ouldridge2017(PRX)}, where the concept of  continuous information flow has played a significant role. Specifically, the transfer entropy~\cite{Allahverdyan2009,Ito2013,Sagawa2012,Prokopenko2015,Hartich2014,Hartich2016,phd,Schreiber2000} and the learning rate~\cite{Allahverdyan2009,Barato2014,Hartich2016,Brittain2017,Shiraishi2015(2),phd} have been shown related to the second law of thermodynamics. The ratio of these two informational quantities is referred to as the sensory capacity~\cite{Hartich2016,phd}, which is argued to be a measure of the effectiveness of stochastic sensors.  

On the other hand, a sufficient statistic is an important concept in statistics~\cite{Cover2006}, which means a probability variable that has sufficient information for an inference task. For example, the latest value of the estimator of the Kalman filter is a sufficient statistic~\cite{Astrom}. We note that the Kalman filter can be realized by a simple chemical reaction network~\cite{Kobayashi2010}.  The Kalman filter has also been studied in terms of information thermodynamics, where the sensory capacity is shown maximum~\cite{Horowitz2014,Hartich2016}.  However, the role of sufficient statistics in more general thermodynamic setups is yet to be understood.

In this paper, we investigate the connection between information thermodynamics, sufficient statistics, and some related concepts. We consider a general class of Markovian stochastic dynamics, described by the continuous-time bipartite network (CBN). In this setup, we prove that the existence of a sufficient statistic implies the maximum sensory capacity, which is a generalization of a previous result that the Kalman filter makes the sensory capacity maximum. This result implies that the sensory capacity is generally a reasonable measure of sufficiency of a statistic. It also follows that if there exists a sufficient statistic, then the information-thermodynamic efficiency cannot exceed one-half, given a trade-off relation between the efficiency and the sensory capacity~\cite{Hartich2016,phd}.

As a special case, we consider bipartite linear Langevin systems based on analytical solution. We show that under certain conditions, there exists the optimal noise intensity, at which the sensory capacity, the information-thermodynamic efficiency, and the total entropy production take the optimal values at the same time. 

We also consider sensory adaptation as an application of our theory to biological systems. In particular, we consider {\it E coli} chemotaxis, which can be modeled as a linear Langevin system~\cite{Barkai1997,Tu2008,Lan2012}. We compute the sensory capacity and the information-thermodynamic efficiency with experimentally realistic parameters. Our result shows that the sensory capacity is near optimal, while the information-thermodynamic efficiency           
does not even reach the half of the maximum as a consequence of the aforementioned trade-off relation. 

This paper is organized as follows. In Sec.~\ref{sec:set}, we describe our general setup and review fundamental concepts in information thermodynamics. In Sec.~\ref{sec:ss}, we prove that the existence of a sufficient statistic implies the maximum sensory capacity. In Sec.~\ref{sec:LLS}, we focus on bipartite linear Langevin systems, and derive the optimal noise intensity. In Sec.~\ref{sec:BC}, we consider a linear Langevin model of {\it E. coli} chemotaxis, and show that the sensory capacity becomes nearly maximum with realistic parameters. In Sec.~\ref{sec:conc}, we make concluding remarks. In Appendixes A and B, we give analytical expressions of the thermodynamic and informational quantities for linear Langevin systems. In Appendix C, we show examples of non-stationary dynamics of  a model of {\it E. Coli}.
\section{\label{sec:set}Setup}

In this section, we formulate our setup of Markovian stochastic dynamics with two subsystems, which is described by the CBN. As will be discussed in the following, the CBN applies to a broad class of stochastic systems including Langevin systems and Markovian jump systems. 

\begin{figure}[b]
\includegraphics[scale=0.3]{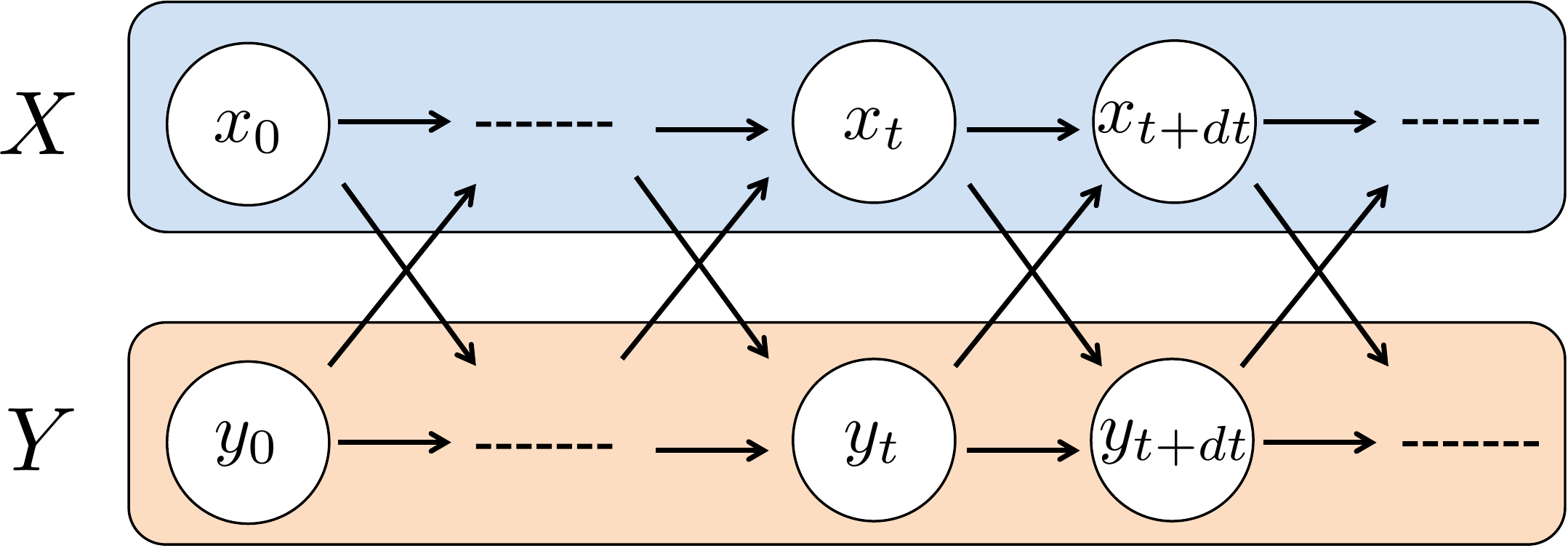}
\caption{\label{fig:dynamics}Graphical representation of the CBN. A node represents a state and an arrow represents a causal relationship.}
\end{figure}

We consider a stochastic system that consists of two subsystems, denoted as $X$, $Y$, whose microscopic states at time $t$ are respectively denoted by $x_t$ and $y_t$. We define the CBN as a Markovian stochastic process whose transition probability satisfies the bipartite property:
\balign{
p(x_{t+dt},y_{t+dt}|x_t,y_t)=p(x_{t+dt}|x_t,y_t)p(y_{t+dt}|x_t,y_t). \label{eq:bip}
}
\Fref{dynamics} graphically represents the causal relationship of the CBN. 

There are two important classes of the CBN. The first is coupled Langevin systems with independent noise:
\balign{
\begin{cases}
\dot{x}_t=f(x_t,y_t)+\xi^X_t, \\
\dot{y}_t=g(x_t,y_t)+\xi^Y_t,
\label{eq:ls}
\end{cases}
}
where $\xi^i_t\ (i=X,Y)$ are white Gaussian noise terms. They satisfy $\average{\xi^j_t\xi^j_{t'}}=2T^i\delta_{ij}\delta(t-t')$, which guarantees the bipartite property \eqref{eq:bip}. We note that such a Langevin system is often used for describing biochemical dynamics with the linear noise approximation~\cite{VanKampen,Hayot2004}. 

The second class of the CBN is coupled Markovian jump systems without simultaneous transition~\cite{Barato2013,Barato2013(2),Barato2014,phd,Hartich2014,Hartich2016}:
\balign{
w^{yy'}_{xx'}=\begin{cases}
w^{yy'}_x & (x=x', y\neq y'), \\
w^{y}_{xx'} & (x\neq x', y=y'), \\
0 & (x\neq x', y\neq y'),
\label{eq:rate}
\end{cases}
}
where $w^{yy'}_{xx'}$ is the transition rate from $(x,y)$ to $(x',y)$ with $x,x'$ and $y,y'$ respectively representing discrete states of $X$, $Y$. As seen from \eref{eq:rate}, the probability of simultaneous transition of the two systems is zero, which guarantees the bipartite property \eqref{eq:bip}. Such a Markovian jump process is also used for describing biochemical dynamics, where chemical states are coarse-grained to discrete variables~\cite{Barato2014}.

We next consider the informational quantities. In general, the strength of a correlation between two probability variables $X$, $Y$ is quantified by the mutual information~\cite{Cover2006}:
\balign{
I(X:Y)&:=\left<\ln \frac{p(x,y)}{p(x)p(y)}\right> \\
&=\sum_{x,y}p(x,y)\ln{\frac{p(x,y)}{p(x)p(y)}},
}
where $p(x,y)$ is the joint distribution, $p(x):=\sum_y p(x,y)$ and $p(y):=\sum_{x}p(x,y)$ are the marginal distributions, and $\langle \cdots \rangle$ denotes the ensemble average. Here, we used capital letters $X$, $Y$ for describing stochastic variables, while small letters $x$, $y$ for their particular realizations. If $X, Y$ are continuous variables, the sum in Eq.~(5) is replaced by the integral. 

In our CBN setup, $I(X_t, Y_t)$ characterizes the instantaneous correlation between the two systems at time $t$. Here again, capital letters $X_t$, $Y_t$ describe probability variables at time $t$, while small letters $x_t$, $y_t$ describe their particular realizations (i.e., particular microscopic states). As seen from the definition, the mutual information is symmetric between the two variables, and therefore, $I(X_t, Y_t)$ does not quantify the directional information flow from one to the other. 

To characterize the  directional information flow from one system to the other, we now consider the two informational quantities: the learning rate~\cite{Barato2014,Hartich2016,Brittain2017,Shiraishi2015(2),phd} and the transfer entropy~\cite{Ito2013,Sagawa2012,Prokopenko2015,Hartich2014,Hartich2016,phd}. The learning rate has recently been introduced in Ref.~\cite{Barato2014} in the context of information thermodynamics, while the transfer entropy has widely been used in a variety of fields such as time series analysis~\cite{time}. We note that the learning rate is also called just ``information flow"~\cite{Horowitz2014,Horowitz2014(2),Rosinberg2016}. 

First of all, the learning rate is defined as the partial derivative of the mutual information. The learning rate associated with $X$ during time $t$ to $t+dt$ is defined by
\balign{
l_X&:=\frac{I(X_{t+dt}:Y_t)-I(X_t:Y_t)}{dt}.
}
In the same manner, the learning rate associated with $Y$ is defined by 
\balign{
l_Y&:=\frac{I(X_t:Y_{t+dt})-I(X_t:Y_t)}{dt}. \label{eq:der}
}
These definitions immediately leads to
\balign{
\frac{d}{dt}I(X_t:Y_t)=l_X+l_Y. \label{eq:partial}
}

We note that the learning rate becomes either positive or negative. In particular, since in the stationary state the left-hand side of \eref{eq:partial} is zero, if the learning rate of a system is positive, then that of the other is negative. 

We consider the meaning of learning rate in more detail. When $l_{Y}>0$, $Y$ gains information about $X_t$, and thus $X_t$ has a larger amount of correlation with $Y_{t+dt}$ than with $Y_t$. In such a case, $Y$ plays the role of a {\it memory} that stores information about $X$. On the other hand, when $l_X<0$, the correlation between $X_t$ and $Y_t$ is utilized by $Y$ by feedback control, or is just dissipated into the environment~\cite{Horowitz2014}. In such a case, $X$ does not play the role of a memory, but is just referred to as a {\it system} that is controlled by the other subsystem. Especially in the stationary state, the subsystem with the positive learning rate is called a memory, and the subsystem with the negative learning rate is called a system. In the following, we suppose that $Y$ is a memory with $l_Y\geq0$. 

We next discuss the transfer entropy. In general, the conditional mutual information between probability variables $X$ and $Y$ under the condition of $Z$ is defined as~\cite{Cover2006}:
\balign{
I(X:Y|Z)&:=\left<\ln\frac{p(x,y|z)}{p(x|z)p(y|z)}\right> \\
&=\sum_{x,y,z}p(x,y,z)\ln{\frac{p(x,y|z)}{p(x|z)p(y|z)}}.
}
In the CBN, the transfer entropy from $X$ to $Y$ is defined by the following conditional mutual information:
\balign{
\mtr:=\frac{I(X_t:Y_{t+dt}|\left\{Y_{t'}\right\}_{t'\leq t})}{dt},
}
where $\left\{ Y_{t'} \right\} _{t'\leq t}$ represents a trajectory of $Y$ from time $t'$ to $t$. The transfer entropy quantifies the newly obtained information by $Y$ from time $t$ to $dt$.  In other words, the transfer entropy quantifies how strongly the value of $X_t$ has influenced dynamics of $Y$. 

In practice, however, it is generally hard to obtain the probabilities under the condition of the whole trajectory $\left\{ Y_{t'} \right\} _{t'\leq t}$ from numerical or experimental data.
We thus define a slightly different version of the transfer entropy, be computed more easily, by 
\balign{
\str:=\frac{1}{dt}I(X_t:Y_{t+dt}|Y_t),
}
where the mutual information is conditioned only by $Y_t$ at the latest time $t$.
We refer to $\mtr$ as the multi-time step transfer entropy (m-TE), and $\str$ as the single-time step transfer entropy (s-TE).

It is known that the s-TE gives an upper bound of the m-TE: $\str\geq \mtr$~\cite{Hartich2014}. Furthermore, the m-TE gives an upper bound  of the learning rate $\mtr\geq l_{Y}$~\cite{Hartich2014,Hartich2016}. Therefore, we have a hierarchy of informational quantities:
\balign{
\str\geq \mtr\geq l_Y. \label{eq:bigger} 
}

We now consider the sensory capacity~\cite{Hartich2016,phd}, which is defined by

\balign{
C_Y:= \frac{l_Y}{\mtr}. 
}
We refer to $C_Y$ as the multi-step sensory capacity (m-SC). In the case of $l_Y\geq0$, $0\leq C_Y\leq 1$ holds from inequality \eqref{eq:bigger}. In this sense, $C_Y$ is meaningful only when $l_Y \geq 0$.  In the case that $l_Y <0$ but $l_X \geq 0$, we instead consider $C_X$.  Throughout this paper, we focus on the situation of $l_Y \geq 0$. 

We can also define the single-step sensory capacity (s-SC) by using the s-TE~\cite{phd}:
\balign{
\overline{C}_Y:=\frac{l_Y}{\str}. \label{eq:Dc}
}
In the case of $l_Y\geq0$, $0\leq \overline{C}_Y\leq 1$ holds from inequality \eqref{eq:bigger}. In addition, inequality \eqref{eq:bigger} implies that $\overline{C}_Y\leq C_Y$. These sensory capacities characterize a kind of effectiveness of information gain by $Y$.  We will investigate the information-theoretic meaning of $C_Y$ and $\overline{C}_Y$ in more detail in the next section. 

We next consider the entropy production and the energetic efficiencies. The entropy production on the CBN is defined in terms of the ratio of the forward and backward path probabilities of stochastic dynamics. In the case of the Langevin system \eqref{eq:ls}, the entropy production in $Y$ is defined by~\cite{Sekimoto}
\balign{
\sigma_Y:=\frac{1}{dt}\left< \ln \frac{p(y_{t+dt})p(y_{t+dt}|x_t,y_t)}{p(y_t)p_B(y_t|x_t,y_{t+dt})} \right>,
}
where $p_B(y_t|x_t,y_{t+dt})$ is the backward path probability defined as
\balign{
p_B(y_t|x_t,y_{t+dt}):=N\exp\left[-\frac{(y_{t+dt}-y_t-g(x_t,y_{t+dt})dt)^2}{4T^Ydt}\right]
}
with $N$ being a normalization factor. In the case of the Markovian jump process \eqref{eq:rate}, the entropy production of $Y$ is defined by~\cite{Seifert2012}
\balign{
\sigma_Y:=\sum_{x,y,y'}p(x,y)w^{x}_{yy'}\ln\frac{w^{x}_{yy'}}{w^{x}_{y'y}}.
}
We can also define $\sigma_{X}$ in the same manner.  The total entropy production is defined by $\sigma_{\rm tot} := \sigma_{X} + \sigma_{Y}$.

The information-thermodynamic efficiency of $Y$ is defined as the ratio between the learning rate and the entropy production:
\balign{
\eta_{Y}:=\frac{l_Y}{\sigma_Y}. \label{eq:Et}
}
Since the second law of information thermodynamics is represented as $\sigma_Y\geq l_{Y}$~\cite{Shiraishi2015(2),Horowitz2014(2)}, the information-thermodynamic efficiency satisfies 
\balign{
\eta_{Y}\leq 1. \label{eq:t2nd}
}

In biochemical systems, the entropy production represents the free-energy dissipation by ATP hydrolysis. Therefore, we can regard the information-thermodynamic efficiency \eqref{eq:Et} as the energetic efficiency, which characterizes how efficiently the memory uses the free energy to get information. 

Interestingly, the sensory capacity and the information-thermodynamic efficiency are quantitatively related with each other. It has been shown in Ref.~\cite{Hartich2016} that if the m-SC takes the maximum (i.e., $C_Y=1$), then the information-thermodynamic efficiency must be less than or equal to the half of the maximum value:
\balign{
\eta_{Y}\leq1/2. \label{eq:half}
}
This implies that the effectiveness of information gain puts a constraint on the energetic efficiency. We note that inequality \eqref{eq:half} has originally been proved for stationary states without feedback~\cite{Hartich2016}, while its generalization to general situations is straightforward.
\section{\label{sec:ss}Sufficient statistics}
In this section, we show that the foregoing informational quantities (i.e., the two transfer entropies and the learning rate) are all equivalent in the presence of a sufficient statistic, implying that these informational quantities provide a good indicator of a sufficient statistic. This is one of the main results of this paper. 

A sufficient statistic enables a kind of optimal inference~\cite{Cover2006}.  In the present context, we focus on the optimality of inference about $X$ by $Y$ in stochastic dynamics.  In particular, we consider the situation that the latest state of $Y$ is a sufficient statistic of the latest state of $X$, where the entire past trajectory of $Y$ is not necessary to infer the latest state of $X$. This is indeed the case for the Kalman filter of a linear Langevin system~\cite{Astrom}, where the Bayesian estimator $Y$ is a sufficient statistic of $X$.
We note that the following argument is not restricted to the Kalman filter, but applies to any CBN.

We set the initial time of dynamics to $t=0$. When the probability distribution on the CBN has the following property, we say that $y_t$ is a sufficient statistic of $x_t$:
\balign{
p(x_t|\left\{y_{t'}\right\}_{0\leq t'\leq t})=p(x_t|y_t). \label{eq:Mark}
}
This means that the entire trajectory $\left\{ y_{t'} \right\}_{0\leq t' \leq t}$ does not have more information about $x_t$ than that $y_t$ has. In other words, $y_t$ provides an optimal inference of $x_t$. We emphasize that \eref{eq:Mark} is different from the Markovian property of the total system $(x_t, y_t)$.  

In the following, we discretize time by $t= k \Delta t$ with $\Delta t > 0$ and $k = 0, 1, 2, \cdots$. With this notation, \eref{eq:Mark} is written as 
\balign{
p(x_k | y_0^k) = p(x_k | y_k) \label{eq:disc},
}
where $x_k := x(k \Delta t)$, $y_k := y(k \Delta t)$, and $y_0^k := (y_0, y_1, \cdots, y_k)$. We will take the limit of $\Delta t \to 0$ at the final stage.

We first note that if \eref{eq:disc} is satisfied, the dynamics of the memory alone has the  Markovian property:  
\balign{
p(y_{k+1}|y^k_0)=p(y_{k+1}|y_k) \label{eq:prf2},
}
because we  have
\balign{
p(y_{k+1}|y_0^k)&=\sum_{x_k}p(x_k,y_{k+1}|y_0^k) \notag\\
&=\sum_{x_k} dx_k p(y_{k+1}|x_k,y_0^k)p(x_k|y_0^k) \notag\\
&=\sum_{x_k} p(y_{k+1}|x_k,y_k)p(x_k|y_k) \notag\\
&=\sum_{x_k} p(x_k,y_{k+1}|y_k) \notag\\
&=p(y_{k+1}|y_k). \label{eq:prf1}
}
Here, to obtain the third equality, we used that $(x_k,y_k)$ obeys the bipartite Markovian dynamics  and that $y_k$ is a sufficient statistic of $x_k$.
In the case of the Kalman filter, \eref{eq:prf1} implies that the estimator obeys the innovation process~\cite{innovation}.

We now show the equivalence of the m-TE and the s-TE in the presence of sufficient statistic.
\begin{thm}[Equivalence of the transfer entropies]
{\rm If $y_k$ is a sufficient statistic of $x_k$, we have} 
\balign{
\mtr=\str.
}
{\rm We note that this is true even without taking the limit of $\Delta t \to 0$.}
\end{thm}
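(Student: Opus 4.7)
The plan is to show that the log-ratio appearing in $I(X_k : Y_{k+1} | Y_0^k)$ reduces term-by-term to the one appearing in $I(X_k : Y_{k+1} | Y_k)$ by collapsing each conditional density to its latest-state version. First I would factor the joint conditional as
\begin{equation*}
p(x_k, y_{k+1} | y_0^k) = p(y_{k+1} | x_k, y_0^k)\, p(x_k | y_0^k).
\end{equation*}
The bipartite Markov property of the full process $(x_k, y_k)$ collapses the first factor to $p(y_{k+1} | x_k, y_k)$, since conditioning on the past of $y$ beyond the current microscopic state is irrelevant for the next step. The sufficient statistic hypothesis \eqref{eq:disc} collapses the second factor to $p(x_k | y_k)$. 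Multiplying back gives $p(x_k, y_{k+1} | y_0^k) = p(x_k, y_{k+1} | y_k)$.

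Next I would deal with the normalizing factors in the log-ratio: $p(x_k | y_0^k) = p(x_k | y_k)$ is exactly \eqref{eq:disc}, and $p(y_{k+1} | y_0^k) = p(y_{k+1} | y_k)$ is the Markovianity \eqref{eq:prf2} of the memory alone that was already derived in the excerpt. All three conditional densities inside the logarithm therefore depend only on the triple $(x_k, y_k, y_{k+1})$ and coincide with the ones appearing in the s-TE. Taking the expectation under the full joint $p(x_k, y_0^{k+1})$ then collapses to an expectation under the marginal $p(x_k, y_k, y_{k+1})$ after summing out $y_0^{k-1}$, yielding $I(X_k : Y_{k+1} | Y_0^k) = I(X_k : Y_{k+1} | Y_k)$ and hence $\mtr = \str$. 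Because the identity is exact at the discrete level, no $\Delta t \to 0$ limit is needed, matching the parenthetical remark after the theorem.

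The main obstacle is organizational rather than technical: two logically distinct conditional-independence properties must be applied to the correct factors, namely the dynamical bipartite Markov property for the forward transition kernel $p(y_{k+1}|x_k,y_k)$, and the statistical sufficiency \eqref{eq:disc} for the Bayesian posterior $p(x_k|y_k)$. Since the auxiliary Markovianity \eqref{eq:prf2} has already been established in the excerpt, what remains is essentially a clean substitution into the definition of the conditional mutual information together with the trivial observation that the integrand no longer depends on $y_0^{k-1}$.
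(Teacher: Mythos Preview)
Your proposal is correct and follows essentially the same route as the paper's proof: both reduce the log-ratio in $I(X_k:Y_{k+1}|Y_0^k)$ to its single-step counterpart by invoking the bipartite Markov property for $p(y_{k+1}|x_k,y_0^k)$ and the memory Markovianity \eqref{eq:prf2} for $p(y_{k+1}|y_0^k)$. The only cosmetic difference is that the paper first cancels $p(x_k|y_0^k)$ from numerator and denominator before collapsing, whereas you collapse each of the three conditional densities separately; the ingredients and logic are identical.
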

\begin{proof}
{\rm It is straightforward to show that}
\balign{
\mtr&=\frac{1}{\Delta t}\left<\ln \frac{p(x_k,y_{k+1}|y_0^k)}{p(x_k|y_0^k)p(y_{k+1}|y_0^k)}\right> \notag\\
&=\frac{1}{\Delta t}\left< \ln\frac{p(y_{k+1}|x_k,y_0^k)}{p(y_{k+1}|y_0^k)} \right> \notag\\
&=\frac{1}{\Delta t}\left< \ln \frac{p(y_{k+1}|x_k,y_k)}{p(y_{k+1}|y_k)} \right> \notag\\
&=\frac{1}{\Delta t}\left< \ln\frac{p(x_k,y_{k+1}|y_k)}{p(x_k|y_k)p(y_{k+1}|y_k)}\right> \notag\\
&=\str,
}
{\rm where we used \eref{eq:prf2} to obtain the third equality.}
\end{proof}

We next discuss the equivalence of the s-TE and the learning rate in the presence of a sufficient statistic.
\begin{thm}[Equivalence of the s-TE and the learning rate]
{\rm If $y_t$ is a sufficient statistic of $x_t$, we have}
\balign{
\str=l_{Y}, \label{eq:claim2}
}
{\rm or equivalently, the s-SC takes the maximum (i.e., $\overline{C}_Y=1$). We note that this is true only in the limit of $\Delta t \to 0$.}
\end{thm}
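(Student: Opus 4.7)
My plan is to rewrite $\str - l_Y$ as a single conditional mutual information and then control how fast it vanishes. Applying the chain rule to $I(X_t:Y_t,Y_{t+\Delta t})$ in two ways,
\begin{align}
I(X_t:Y_t)+I(X_t:Y_{t+\Delta t}\mid Y_t) = I(X_t:Y_{t+\Delta t})+I(X_t:Y_t\mid Y_{t+\Delta t}),
\end{align}
I immediately read off
\begin{align}
(\str-l_Y)\,\Delta t = I(X_t:Y_t\mid Y_{t+\Delta t}).
\end{align}
So Eq.~\eqref{eq:claim2} is equivalent to the single estimate $I(X_t:Y_t\mid Y_{t+\Delta t}) = o(\Delta t)$ as $\Delta t\to 0$, and the only job is to extract such an estimate from the sufficient-statistic hypothesis.

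I would apply the sufficient-statistic hypothesis at the \emph{advanced} time $t+\Delta t$: $p(x_{t+\Delta t}\mid y_0^{t+\Delta t}) = p(x_{t+\Delta t}\mid y_{t+\Delta t})$, which is the exact conditional independence $X_{t+\Delta t}\perp Y_t\mid Y_{t+\Delta t}$, i.e., $I(X_{t+\Delta t}:Y_t\mid Y_{t+\Delta t})=0$. It remains to transfer this from $X_{t+\Delta t}$ back to $X_t$ along the bipartite dynamics. In the Langevin subclass I would write $X_t = X_{t+\Delta t} - f(X_t,Y_t)\Delta t - \xi^X_{[t,t+\Delta t]}$, where the Gaussian increment $\xi^X_{[t,t+\Delta t]} = O(\sqrt{\Delta t})$ is independent of $\xi^Y$, and therefore of $(Y_t,Y_{t+\Delta t})$, by the bipartite property. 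If the drift were absent, $X_t$ would be $X_{t+\Delta t}$ corrupted by noise uncorrelated with the $Y$'s, so the conditional independence would pass through verbatim; the only residual $Y_t$-dependence in $X_t$ beyond that inherited from $X_{t+\Delta t}$ is the $O(\Delta t)$ drift. A structurally parallel argument handles the Markovian jump case: with probability $1-O(\Delta t)$ no $X$-jump occurs on $[t,t+\Delta t]$ and $X_t = X_{t+\Delta t}$ exactly, while the rare jump events can contribute only at second order.

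The main obstacle I expect is this last cancellation. The Langevin transition kernels have Gaussian widths of order $\sqrt{\Delta t}$, so their logarithms carry pieces of order $\log\Delta t$, and a careless bound on an $O(\Delta t)$ drift perturbation through such a logarithm would give an unacceptable $O(\Delta t)$ — or even $O(\Delta t\log\Delta t)$ — contribution to $I(X_t:Y_t\mid Y_{t+\Delta t})$. To obtain the required $o(\Delta t)$ rate, I would Taylor-expand $\log[p(x_t\mid y_t,y_{t+\Delta t})/p(x_t\mid y_{t+\Delta t})]$ in the drift and argue that the first-order-in-$\Delta t$ coefficient is an odd, zero-mean function of the Gaussian increment $\xi^X_{[t,t+\Delta t]}$, so that averaging against its Gaussian density leaves only $O(\Delta t^2)$. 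This cancellation relies essentially on the independence of $\xi^X$ and $\xi^Y$ — i.e., on the bipartite property itself — so that all the hypotheses (sufficient statistic at $t$, sufficient statistic at $t+\Delta t$, and bipartite structure) enter simultaneously in the final estimate.
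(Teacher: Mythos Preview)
Your opening chain-rule identity
\[
(\str-l_Y)\,\Delta t \;=\; I(X_k:Y_k\mid Y_{k+1})
\]
is exact and is in fact cleaner than the paper's route. The paper instead manipulates several mutual informations, invokes the cited estimate $I(X_k:Y_{k+1}\mid Y_k)-I(X_{k+1}:Y_{k+1}\mid Y_k)=O(\Delta t^2)$ (their Eq.~(31), imported from Refs.~\cite{Ito2013,Ito2016,Hartich2014}), and lands at $(\str-l_Y)\Delta t = I(X_{k+1}:Y_k\mid Y_{k+1})+O(\Delta t^2)$. So both routes reduce the theorem to the same time-shift lemma, $I(X_k:Y_k\mid Y_{k+1})-I(X_{k+1}:Y_k\mid Y_{k+1})=o(\Delta t)$, after which the sufficient-statistic hypothesis kills the advanced-time term exactly. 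The paper cites this lemma; you attempt to prove it directly.

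That attempted proof has a genuine gap. In the drift-free case $X_{k+1}=X_k+\xi^X$ with $\xi^X\perp(X_k,Y_k,Y_{k+1})$, the conditional independence does \emph{not} pass ``verbatim'' from $X_{k+1}$ back to $X_k$. The Markov structure here is $Y_k\to X_k\to X_{k+1}$ given $Y_{k+1}$, so data processing gives only
\[
I(X_{k+1}:Y_k\mid Y_{k+1}) \;\le\; I(X_k:Y_k\mid Y_{k+1}),
\]
the wrong direction: adding independent noise to $X_k$ can only lower its conditional mutual information with $Y_k$, and the left side vanishing does not force the right side to vanish. (It happens to in the Gaussian case, since the conditional covariance is unchanged by adding independent noise; but not in general.) Because your ``zeroth-order'' term is not actually zero, the subsequent $O(\Delta t)$-drift perturbation and odd-increment cancellation are built on the wrong baseline, and the jump-process variant inherits the same issue. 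The $O(\Delta t^2)$ estimate is true, but for a different reason: by the chain rule and bipartiteness ($I(X_{k+1}:Y_{k+1}\mid X_k,Y_k)=0$) one has
\[
I(X_k:Y_{k+1}\mid Y_k)-I(X_{k+1}:Y_{k+1}\mid Y_k)\;=\;I(X_k:Y_{k+1}\mid X_{k+1},Y_k),
\]
and this last quantity is second order because, given $(X_{k+1},Y_k)$, $Y_{k+1}$ depends on $X_k$ only through the $O(\Delta t)$ drift while the residual randomness in $X_k$ is the independent increment $\xi^X$. This is precisely the content of the paper's Eq.~(31); once you substitute it for your drift-free heuristic, your argument goes through and is arguably more transparent than the paper's.
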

\begin{proof}
We first note that 
\balign{
l_Y\Delta t=&-\left\{I(X_{k+1}:X_k)-I(X_k:Y_k)\right\} \notag\\
&+I(X_{k+1}:Y_{k+1})-I(X_k:Y_k)+\mathcal{O}(\Delta t^2).
}
We then consider the following quantity:
\balign{
&\str\Delta t-l_Y\Delta t-I(X_{k+1}:Y_k|Y_{k+1}) \notag\\
=&I(X_k:Y_{k+1}|Y_k)-I(X_{k+1}:Y_k|Y_{k+1})+\mathcal{O}(\Delta t^2) \notag\\
 &-I(X_{k+1}:Y_{k+1})+I(X_{k+1}:Y_k) \notag\\
=&I(X_k:Y_{k+1}|Y_k)-\left\{I(Y_{k+1}:Y_{k+1},Y_k)-I(X_{k+1}:Y_{k+1})\right\} \notag\\
& -I(X_{k+1}:Y_{k+1})+I(X_{k+1}:Y_k)+\mathcal{O}(\Delta t^2)\notag\\
=&I(X_k:Y_{k+1}|Y_k)-I(X_{k+1}:Y_{k+1},Y_k) \notag \\
&+I(X_{k+1}:Y_k)+\mathcal{O}(\Delta t^2)\notag\\
=&I(X_k:Y_{k+1}|Y_k)-I(X_{k+1}:Y_{k+1}|Y_k)+\mathcal{O}(\Delta t^2). 
}
Here, the last line satisfies that~\cite{Ito2013,Ito2016,Hartich2014}
\balign{
I(X_k:Y_{k+1}|Y_k)-I(X_{k+1}:Y_{k+1}|Y_k)=\mathcal{O}(\Delta t^2), \label{eq:point2}
}
and therefore
\balign{
\str-l_{Y}=\frac{1}{\Delta t}I(X_{k+1}:Y_k|Y_{k+1})+\mathcal{O}(\Delta t).  \label{eq:point}
}
On the other hand, we have
\balign{
I(X_{k+1}:Y_k|Y_{k+1})&=\left< \ln\frac{p(x_{k+1},y_k|y_{k+1})}{p(x_{k+1}|y_{k+1})p(y_k|y_{k+1})} \right> \notag \\
&=\left< \ln \frac{p(x_{k+1}|y_{k+1},y_k)}{p(x_{k+1}|y_{k+1})} \right> \notag \\
&=\left< \ln \frac{p(x_{k+1}|y_{k+1})}{p(x_{k+1}|y_{k+1})} \right> \notag \\
&=0, \label{eq:co}
}
where we used the definition of a sufficient statistic to obtain the third equality.
By combining Eqs.~\eqref{eq:point} and \eqref{eq:co}, we obtain \eref{eq:claim2}.
\end{proof}
As seen in \eref{eq:co}, it is sufficient to assume that $p(x_{k+1}|y_{k+1},y_k)=p(x_{k+1}|y_{k+1})$ to obtain \eref{eq:claim2}.  This is consistent with the fact that both of the s-TE and the learning rate only concern stochastic variables at time $k$ and $k+1$. 

By combining Theorem 1 and 2, we obtain the following.
\begin{cor}
{\rm If $y_t$ is a sufficient statistic of $x_t$, we have}
\balign{
\mtr=\str=l_{Y},
}
{\rm or equivalently, the m-SC takes the maximum (i.e., $C_Y=1$).
This is true only in the limit of $\Delta t \to 0$.}
\end{cor}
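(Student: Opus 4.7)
The plan is essentially to chain the two preceding theorems, since the corollary is a direct bookkeeping consequence of them. First I would invoke Theorem 1, which gives $\mtr = \str$ under the sufficient statistic condition $p(x_k|y_0^k) = p(x_k|y_k)$; this equality holds for every $\Delta t > 0$ and does not require a limit. Then I would invoke Theorem 2, which gives $\str = l_Y$ in the limit $\Delta t \to 0$ under the same sufficiency condition (or more precisely under $p(x_{k+1}|y_{k+1},y_k) = p(x_{k+1}|y_{k+1})$, which follows from \eqref{eq:disc}). Composing the two identities in this limit yields
\begin{equation}
\mtr \;=\; \str \;=\; l_Y,
\end{equation}
which is the first claim of the corollary.

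For the second, equivalent, formulation I would simply substitute into the definition $C_Y := l_Y / \mtr$. Since the chain above gives $l_Y = \mtr$, we immediately obtain $C_Y = 1$, which is the maximum value allowed by the hierarchy \eqref{eq:bigger}. No additional bound needs to be checked, because \eqref{eq:bigger} already guarantees $C_Y \leq 1$ whenever $l_Y \geq 0$, and the sufficient statistic hypothesis forces equality.

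I do not foresee a technical obstacle here, as both ingredients have been established. The only subtle point worth flagging in the write-up is the role of the limit $\Delta t \to 0$: Theorem 1 is an identity at finite $\Delta t$, but Theorem 2 only holds up to terms of order $\Delta t$, so the full chain $\mtr = \str = l_Y$ is an equality strictly in the continuous-time limit. I would therefore explicitly note, just as the statement of the corollary does, that the result is asserted only in this limit, so that the $\mathcal{O}(\Delta t)$ remainder from Theorem 2 does not propagate into the final expressions.
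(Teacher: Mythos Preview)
Your proposal is correct and follows exactly the paper's approach: the paper simply states that the corollary is obtained ``by combining Theorem 1 and 2,'' which is precisely the chaining argument you describe. Your additional remarks on the role of the $\Delta t \to 0$ limit and the substitution into $C_Y := l_Y/\mtr$ are accurate elaborations of what the paper leaves implicit.
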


We summarize the foregoing results as follows:\\

\noindent
(i) $p(x_t|\left\{y_{t'}\right\}_{t'\leq t})=p(x_t|y_t)$
\balign{
\Rightarrow \mtr=\str=l_{Y},\ C_Y=1. \label{eq:mtrstr}
}
(ii) $p(x_{t+dt}|y_{t+dt},y_t)=p(x_{t+dt}|y_{t+dt})$
\balign{
\Rightarrow \str=l_{Y}, \ \overline{C}_Y=1. \label{eq:mtrstr2}\ \ \ \ \ \ \ \ \ \ \ \ 
}
These results imply that the sensory capacities ($C_Y$ and $\overline{C}_Y$) are good indicators of a sufficient statistic.
Therefore, we can adopt $C_Y$ and $\overline{C}_Y$ to quantify how a stochastic variable is close to a sufficient statistic. Combining our result \eqref{eq:mtrstr} and a previous result \eqref{eq:half}, we find that the existence of a sufficient statistic inevitably leads to energetic dissipation.

As mentioned before, the Kalman filter of a linear Langevin system provides a sufficient statistic~\cite{Astrom}, where $\mtr=l_{Y}$ has been proved in Ref.~\cite{Horowitz2014}. On the other hand, our argument here does not rely on the details of the system, and thus is not restricted to the Kalman filter of a linear system; our results are applicable to a broad class of stochastic dynamics described by the CBN, including nonlinear systems.

In Ref.~\cite{Hartich2016}, it has been proved that $p(x_t|\left\{y_{t'}\right\}_{t'\leq t})=p(x_t|y_t) \Leftrightarrow C_Y=1$ for stationary states. (While the original proof is for feed-forward systems, its generalization to feedback systems is straightforward, as long as the system is in a stationary state.)  However, $``\Leftarrow"$ is not generally true for non-stationary states, as seen from our proof above.
\section{\label{sec:LLS}Optimal Noise Intensity in Linear Langevin Systems}
In this section, we discuss a more quantitative connection between the entropy production and the informational quantities (and thus sufficient statistics), by focusing on linear Langevin systems as a special case of the CBN:
\begin{eqnarray}
\begin{cases}
\dot{x}_t=a^{11}x_t+a^{12}y_t+b^1+\xi^X_t, \\
\dot{y}_t=a^{21}x_t+a^{22}y_t+b^2+\xi^Y_t. \label{eq:general}
\end{cases}
\end{eqnarray}
Here, $\xi^i_t\ (i=X,Y)$ are the white Gaussian noise terms, which satisfy $\average{\xi^i_t}=0$ and
 $\average{\xi^i_t\xi^j_{t'}}=2T^i\delta_{ij}\delta(t-t')$. In this section, we only consider the steady state, unless otherwise stated. We assume that the Langevin equation~\eqref{eq:general} has a stable steady distribution, for which the following condition is necessary~\cite{Gardiner}: 
\balign{
a^{11}+a^{22}<0, \  a^{11}a^{22}-a^{12}a^{21}>0 \label{eq:condition}.
}

From the direct calculation from \eref{eq:general}, the analytical expressions of $\sigma_{X},\ \sigma_{Y},\ \sigma_{\rm tot},\ \eta_{Y},\ \overline{C}_Y$ have been obtained~\cite{Hartich2016} (see Appendix A for explicit formulas).  We note that these quantities  are functions of the noise intensity ratio $r:=T^X/T^Y$, rather than individual noise intensities. 

Based on the analytical expressions, we obtain the trade-off relationship between the sensory capacities and the information-thermodynamic efficiency in a simple form. In fact, it has been shown in \cite{Hartich2016} that $C_Y$ and $\eta_Y$ satisfy
\balign{ 
4\eta_{Y}(1-\eta_{Y})\leq C_Y \leq 2\sqrt{\eta_{Y}(1-\eta_{Y})}, \label{eq:con}
}
Furthermore, if we consider the s-SC, the above inequalities reduce to a single equality, which has been shown in \cite{phd}:
\balign{
\overline{C}_Y=4\eta_{Y}(1-\eta_{Y}). \label{eq:claim4}
} 
These relations illuminate that the informational quantities and the entropy production have a nontrivial connection.

We now consider optimization of $\eta_Y$, $\sigma_{\rm tot}$, $\overline{C}_Y$ with respect to $r$, where optimization means maximization for $\eta_Y$ and  $\overline{C}_Y$, but minimization for  $\sigma_{\rm tot}$.
Such optimization can be analytically performed just by differentiating these quantities by $r$, by noting conditions $r>0$ and \eqref{eq:condition}.  

To avoid too much complication, we assume that $a_{11} = 0$ in the following.  This condition is indeed satisfied in a model of \textit{E. Coli} chemotaxis, which is discussed in the next section.
In this case, we obtain the following optimized quantities (see  Appendix A for details). The optimal information-thermodynamic efficiency is given by
\balign{
\eta^{\ast}_{Y}=\frac{(a^{22})^2}{-4a^{12}a^{21}+(a^{22})^2}, \ {\rm where}\ \ \ r^{\ast}=-\frac{a^{12}}{a^{21}}.
}
and the optimal total entropy production is given by
\balign{
\sigma^{\ast}_{\rm tot}&=\frac{4a^{12}a^{21}}{a^{22}}, \ {\rm where}\ \ \ r^{\ast}=-\frac{a^{12}}{a^{21}},
}
For the s-SC, we have the following two cases. When $4a^{12}a^{21}+(a^{22})^2>0$, the optimal s-SC is given by
\balign{
\overline{C}^{\ast}_Y=1,
}
where
\balign{
r^{\ast}=\frac{2a^{12}a^{21}+a^{22}(a^{22}\pm\sqrt{4a^{12}a^{21}+(a^{22})^2})}{2(a^{21})^2}.
}
Here, Eq.~(43) suggests that $y_t$ is a sufficient statistic of $x_t$ in the sense of (36), given that we suppose the steady state here. When $4a^{12}a^{21}+(a^{22})^2<0$, on the other hand, we have
\balign{
\overline{C}^{\ast}_Y=-\frac{16a^{12}a^{21}(a^{22})^2}{(-4a^{12}a^{21}+(a^{22})^2)^2}, \ {\rm where}\ \ \ r^{\ast}=-\frac{a^{12}}{a^{21}}.
}

From the above results, we find that the optimal noise intensity ratio $r^{\ast}$ is the same for $\eta_{Y}$, $\sigma_{\rm tot}$, and also for $\overline{C}_Y$ of the second cese.
This coincidence of optimization would be surprising, though it is just a specific characteristic of linear Langevin systems. In addition, we can show that $\sigma_{X} = \sigma_{Y}$ holds with the same noise intensity ratio $r^\ast$.
\section{\label{sec:BC}Sensory Adaptation}
Investigation of sensory adaptation in terms of stochastic thermodynamics has recently attracted attention~\cite{Barato2013,Itoecoli, Hartich2016, Sartori2014}. Here, we specifically consider signal transduction in {\it E. coli} chemotaxis as an example described by a linear Langevin equation. In particular, we investigate the s-SC and the information-thermodynamic efficiency.
\begin{figure}[b]
\includegraphics[scale=0.4]{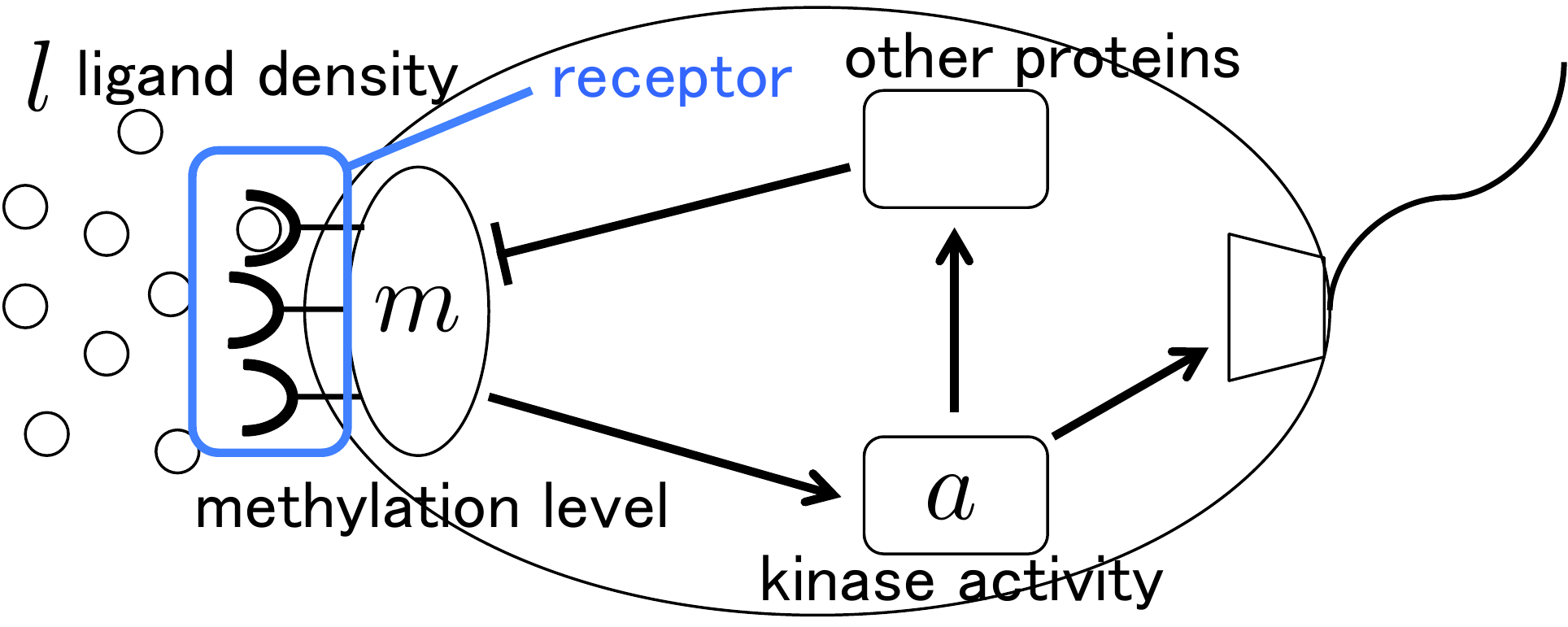}
\caption{\label{fig:chemo}Schematic of signal transduction of {\it E. coli} chemotaxis. The receptor senses the external ligand concentration $l$ and changes the kinase activity $a$, which affects  the methylation level $m$ of the receptor. The methylation level again affects the kinase activity, which makes a negative feedback loop.}
\end{figure}

\subsection{Model of chemotaxis}
The mechanism of the signal transduction of \textit{E. Coli} chemotaxis can be illustrated as follows (see also \fref{chemo})~\cite{Barkai1997}. First, ligands outside a cell are sensed by a receptor on the cell surface, and then change the degree of kinase activation in the cell. Then, the degree of kinase activation changes the methylation level of the receptor through another protein inside the cell. The methylation level  then affects the kinase activity, which makes a negative feedback loop between them.  

The above process can be modeled by the following Langevin equation with the ligand signal $l$, kinase activity $a$, and the methylation level $m$~\cite{Itoecoli,Lan2012,Tu2008,Barkai1997}:
\begin{eqnarray}
\begin{cases}
\dot{m}_t=-\frac{1}{\tau_M}a_t+\xi^M_t, \\
\dot{a}_t=-\frac{1}{\tau_A}\left[ a_t-\overline{a}_t(m_t,l_t) \right]+\xi^A_t, \label{eq:ecoli}
\end{cases}
\end{eqnarray}
where $\xi^i_t\ (i=M, A)$ satisfy $\average{\xi^i_t}=0, \average{\xi^i_t\xi^j_{t'}}=2T^i\delta_{ij}\delta(t-t')$, and $\tau_M,\tau_A$ are time constants satisfying $\tau_M\gg\tau_A>0$. Here, $M$ and $A$ represent the methylation level and the kinase activity, respectively. We note that $l$ describes the logarithm of the ligand concentration $l=\log([L]/[L_0])$, where $[L]$ is the original ligand density and $[L_0]$ is some reference concentration value~\cite{Tu2008}.

In \eref{eq:ecoli}, $\overline{a}_t(m_t,l_t)$ is the steady-state value of the kinase activity under $m_t$ and $l_t$ being fixed. With the linear noise approximation, we have a simple form $\overline{a}_t(m_t,l_t)\simeq\alpha m_t-\beta l_t$~\cite{Tostevin2009,Tu2008}. Under this approximation, \eref{eq:ecoli} is a special case of \eref{eq:general}, where $x_t=m_t,\ y_t=a_t,\ a^{11}=0,\ a^{12}=-1/\tau_M,\ a^{21}=-1/\tau_A,\ a^{22}=-\alpha/\tau_A,\ b^{1}=0, b^{2}=-\beta l_t/\tau_A$. We note that in order to realize the perfect adaptation, it is necessary that the dynamics of the methylation level does not depend on its own state~\cite{Tu2008}, which corresponds to the condition $a^{11}=0$ discussed in Sec.~\ref{sec:LLS}. 

In the steady state, the learning rate of the methylation level is negative ($l_M<0$), and that of the kinase activity is positive ($l_A>0$), which implies that information flows from the methylation level to the kinase activity. This is because  the time scale separation $\tau_M \gg \tau_A$~\cite{Lan2016}, which makes the kinase activity follow the methylation level. In this sense, we regard the kinase activity as a memory, which corresponds to $Y$ in the foregoing sections. Correspondingly, we consider the sensory capacity of the kinase activity, $\overline{C}_A := l_A / \overline{T}_{M \to A}$, in the following. 

We note that the model \eqref{eq:ecoli} is a coarse-grained model by neglecting a lot of degrees of freedoms of a real biological system. Therefore, $\sigma_A$ quantifies effective dissipation that is relevant only to the coarse-grained variables (i.e., $m$ and $a$), which gives a lower bound of real dissipation~\cite{Kawaguchi2013}.

\subsection{Steady-state analysis}

\begin{figure*}[htbp]
\includegraphics[scale=0.55]{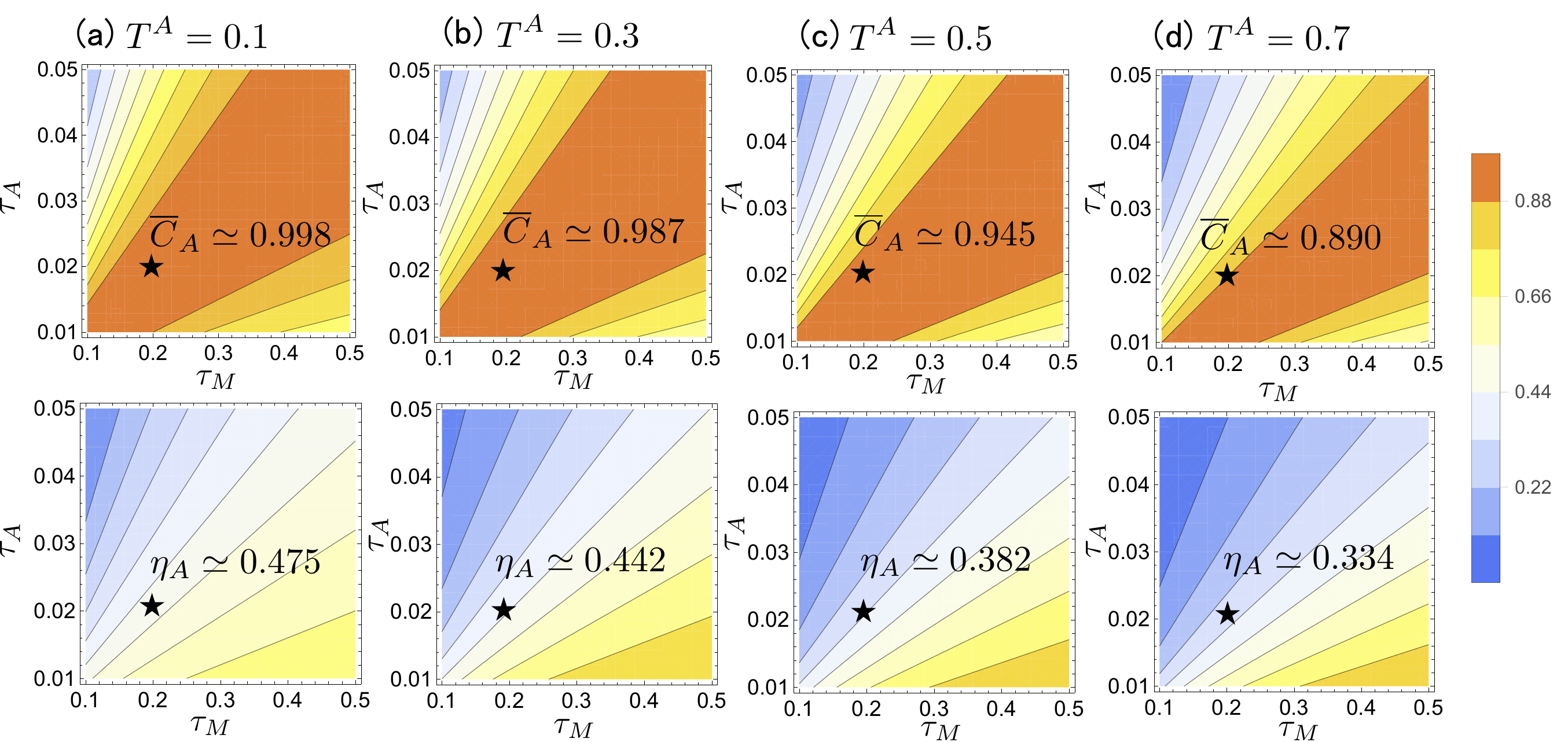}
\caption{\label{fig:capy}Contour plot of the s-SC $\overline{C}_A$ (upper panel) and the information-thermodynamic efficiency $\eta_A$ (lower panel) of the kinase activity. The horizontal and vertical axes are the time constants $\tau_A$ and $\tau_M$, respectively. The noise intensity $T^A$ is chosen to be (a) $T^A=0.1$,\ (b) $T^A=0.3$,\ (c) $T^A=0.5$,\ (d) $T^A=0.7$. The star marks represent the real experimental values of the time constants in {\it E. coli} ($\tau_M=0.2, \tau_A=0.02$)~\cite{Tu2008,Emonet2008,Tostevin2009,Lan2012}. Other parameters are given by $T^M=0.005$ and $\alpha=2.7$, which are consistent with real experiments~\cite{Tu2008,Emonet2008,Tostevin2009,Lan2012}.}
\end{figure*}

\begin{figure*}[htbp]
\includegraphics[scale=0.7]{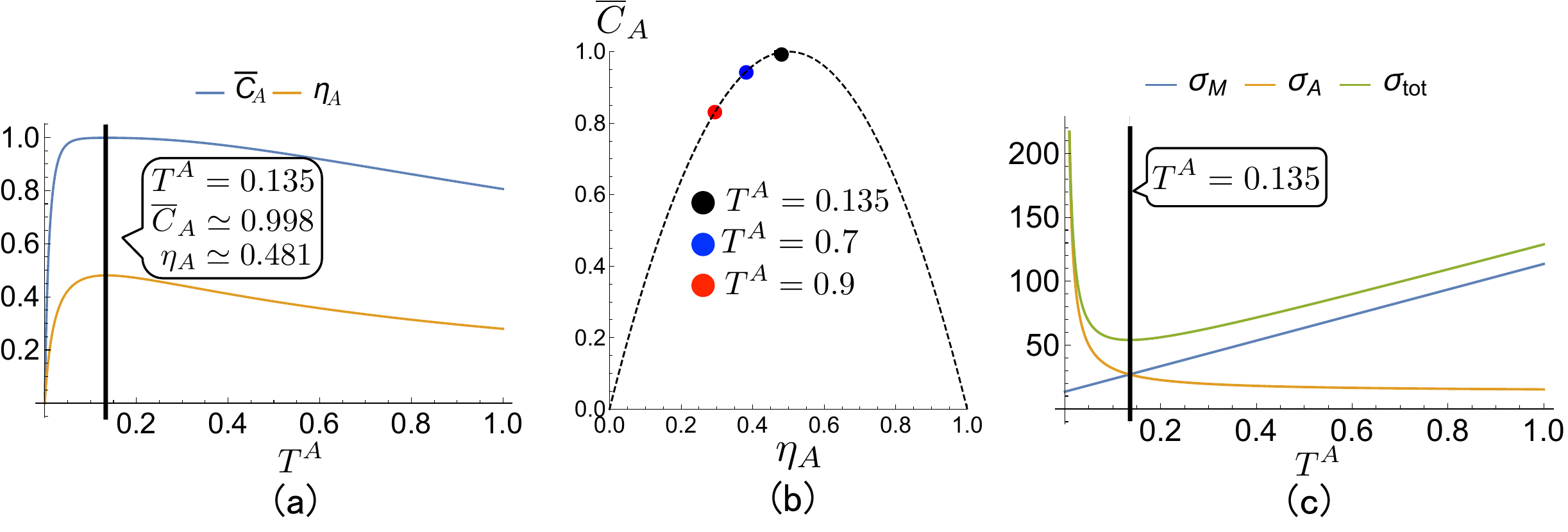}
\caption{\label{fig:graph2}(a) The s-SC $\overline{C}_A$ (blue line) and the information-thermodynamic efficiency $\eta_A$ (orange line) against the noise intensity $T^A$. The optimal noise intensity, at which $\overline{C}_A$ and $\eta_A$ take the maximum value, is indicated by the black bold line. (b) The dotted line represents the trade-off relation \eqref{eq:claim4} between $\overline{C}_A$ and $\eta_A$ in the present setup. The cases of $T^A=0.135$ (black), $T^A=0.7$ (blue), $T^A=0.9$ (red) are indicated by the colored circles, where $T^A=0.135$ (black) is the optimal noise intensity. (c) The entropy productions of the subsystems ($\sigma_M$ and $\sigma_A$) and that of the total system ($\sigma_{\rm tot}$). The noise intensity at which the total entropy production takes the minimum value is indicated by the black bold line. The other parameters are the same as in Fig.~\ref{fig:capy}. } 
\end{figure*}

We first consider the steady state of the chemotaxis model with realistic experimental parameters~\cite{Tu2008,Emonet2008,Tostevin2009,Lan2012}. Figure 3 shows the contour plot of $\overline{C}_A$ and $\eta_A$ as functions of the two time constants. The ranges of the time constants are around the real experimental values that are marked by stars. We note that  $\tau_M\gg \tau_A$ is satisfied here. On the other hand, since the noise intensity $T^A$ of the kinase activity generally depends on fluctuations of the ligand, the precise value of $T^A$ is difficult to access experimentally. Therefore, in \fref{capy}, we have shown several values of $T^A$, which would be consistent with real situations: (a) $T^A=0.1$\ (b) $T^A=0.3$\ (c) $T^A=0.5$\ (d) $T^A=0.7$. We note that condition \eqref{eq:condition} is always satisfied with our choice of parameters in this section.

In all the cases shown in \fref{capy}, we see that the s-SC is almost optimized (i.e., $\overline{C}_A \simeq 1$), while the information-thermodynamic efficiency is far from optimal (i.e., $\eta_A < 0.5)$ as a consequence of the trade-off relation \eqref{eq:claim4}. This suggests that {\it E. Coli} chemotaxis approximately realizes a sufficient statistic at the cost of energetic dissipation.

In addition, we plot $\overline{C}_A$ and $\eta_A$ as functions of  the noise intensity $T^A$ (Fig.~4 (a)). As $T^A$ increases, both of these quantities rapidly increase from zero, take the maximum at $T^A=0.135$ at the same time (as a consequence of the general property of linear Langevin systems as discussed in Sec.~IV), and then gradually decrease.

The maximum value of the s-SC is given by $\overline{C}_A \simeq 0.998$, which is close to unity.
Furthermore, if $T^A$ is not too small (i.e., larger than around $0.05$) and not too large (i.e.,  smaller than around $0.4$), the s-SC $\overline{C}_A$ is close to unity, which is consistent with the results in Fig.~3. 

The maximum value of the information-thermodynamic efficiency is given by $\eta_A \simeq 0.481$.  While this value is far from the maximum $\eta_A =1$ imposed by the second law \eqref{eq:t2nd}, it is sill close to the maximum $\eta_A = 1/2$ under the condition of $\overline{C}_A=1$ (see inequality \eqref{eq:half}). Figure 4 (b) shows the trade-off relation \eqref{eq:claim4} between $\overline{C}_A$ and $\eta_A$, which is a general property of linear Langevin systems.  
From this, we can directly see that $\overline{C}_A \simeq 1$ implies $\eta_A \simeq 1/2$.

In addition, Fig.~4 (c) shows the entropy productions in the same setup. Here, $\sigma_M$ monotonically increases as the noise intensity increases, while $\sigma_A$ monotonically decreases. On the other hand, the total entropy production $\sigma_{\rm tot}=\sigma_M+\sigma_A$ has the minimum at $T^A = 0.135$. This optimal noise intensity coincides with that for $\overline{C}_A$ and $\eta_A$, which is again a consequence of the general argument for linear Langevin systems in the previous section.

\section{Conclusion}\label{sec:conc}
We have investigated the role of sufficient statistics in thermodynamics of information for general CBNs. In particular, we have proved that the presence of a sufficient statistic implies the maximum sensory capacity for both of the multi-time step \eqref{eq:mtrstr} and the single-time step \eqref{eq:mtrstr2} cases.  Given that the sensory capacity is quantitatively related to the information-thermodynamic efficiency as in Eqs.~\eqref{eq:half}, \eqref{eq:con}, \eqref{eq:claim4}, our result leads to constraints on the information-thermodynamic efficiency in the presence of a sufficient statistic.  

In addition, by focusing on linear Langevin systems, we found that in a particular parameter region, there exists optimal noise intensity, with which the information-thermodynamic efficiency, the total entropy production, and the s-SC are optimized at the same time.  As a special case of linear Langevin systems, we have studied a model of {\it E. Coli} chemotaxis.  We then observed that the s-SC takes a nearly maximum value with experimentally realistic parameters, while the information-thermodynamic efficiency is not very high because of the trade-off relation \eqref{eq:claim4}.  

Our result suggests that thermodynamics of information would be useful to investigate biological information processing.  Application of our approach to a broader class of stochastic biological phenomena is a future issue.

\begin{acknowledgments}
We thank Sosuke Ito, Henrik Sandberg and Martin-Luc Rosinberg for useful discussions. T. M. is supported by Grant-in-Aid for JSPS Fellows No. JP17J05540. T. S. is supported by JSPS KAKENHI Grant No. JP16H02211 and No. 25103003.  

\end{acknowledgments}

\appendix

\section{Analytical formulas for the steady state}\label{sec:Sm}
\begin{widetext}
We show the analytical formulas of the informational and thermodynamic quantities for the steady state, which have been derived in Ref.~\cite{Hartich2016} and are used in Sec.~IV and Sec.~V A of this paper. The steady state values of the quantities discussed in Sec.~\ref{sec:LLS} are given as follows:
\balign{
\sigma_{X}&=\frac{a^{12}(a^{21} r-a^{12})}{(a^{11}+a^{22})r}, \label{eq:sigmaxr}\\
\sigma_{Y}&=\frac{a^{21}(-a^{21}r+a^{12})}{a^{11}+a^{22}}, \label{eq:sigmayr}\\
\sigma_{\rm tot}&=-\frac{(a^{21}r-a^{12})^2}{(a^{11}+a^{22})r}, \label{eq:sigmatr}\\
\str&=-\frac{(a^{21})^2((a^{21})^2r^2+(-2a^{12}a^{21}+(a^{11}+a^{22})^2)r+(a^{12})^2)}{4(a^{11}+a^{22})((a^{21})^2r-a^{12}a^{21}+a^{11}(a^{11}+a^{22}))}, \label{eq:strdesu}\\
\eta_{Y}&=\frac{(a^{11}+a^{22})(a^{21}a^{22}r+a^{11}a^{12})}{a^{21}((a^{12})^2-2a^{12}a^{21}r+r((a^{11}+a^{22})^2+(a^{21})^2r))},\label{eq:etayr}\\
\overline{C}_Y&=\frac{4(a^{11}+a^{22})(a^{21}r-a^{12})(a^{21}a^{22}r+a^{11}a^{12})((a^{21})^2 r-a^{12}a^{21}+a^{11}(a^{11}+a^{22}))}{(a^{21})^2((a^{12})^2-2a^{12}a^{21}r+r((a^{11}+a^{22})^2+(a^{21})^2 r))^2}. \label{eq:Cr}
}
\end{widetext}

\section{Analytical formulas for non-steady states}
In this section, we derive analytical expressions of the informational and thermodynamic quantities for non-steady states, which are used to obtain Fig.~5 in Appendix C. 

We consider the following Langevin equation, which is slightly more general than \eref{eq:general}:
\begin{eqnarray}
\begin{cases}
\dot{x}_t=a_t^{11}x_t+a_t^{12}y_t+b_t^1+\xi^X_t, \\
\dot{y}_t=a_t^{21}x_t+a_t^{22}y_t+b_t^2+\xi^Y_t, \label{eq:general2}
\end{cases}
\end{eqnarray}
where $\xi^i_t\ (i=X,Y)$ are the white Gaussian noise terms satisfying $\average{\xi^j_t\xi^j_{t'}}=2T_t^i\delta_{ij}\delta(t-t')$. 
Let $p_t(x,y)$ be the probability density of $(x,y)$ at time $t$.  The corresponding Fokker-Planck equation is given by
\balign{
&\del_tp_t(x,y)=-\del_x J^X_t(x,y)-\del_y J^Y_t(x,y), \label{eq:pt}
}
where
\balign{
&J^X_t(x,y):=\left(a_t^{11}x+a_t^{12}y+b^1_t\right)p_t(x,y)-T^X_t\frac{\del}{\del x}p_t(x,y), \label{eq:Jx} \\
&J^Y_t(x,y):=\left(a_t^{21}x+a_t^{22}y+b^2_t\right)p_t(x,y)-T^Y_t\frac{\del}{\del y}p_t(x,y).
}

We define the following matrix and vectors:
\balign{
A_t:=
\begin{pmatrix}
a_t^{11} &a_t^{12} \\
a^{21}_t &a_t^{22}
\end{pmatrix}, \ \ 
\bm{x}_t := \left(
    \begin{array}{c}
      x_t \\
      y_t          
         \end{array}
  \right), \ \ 
  \bm{b}_t :=\left(
    \begin{array}{c}
      b^1_t \\
      b^2_t          
         \end{array}
  \right).
}
We also consider the ensemble average  and the covariance of $\bm x_t$:
\balign{
\average{\bm{x}_t}&:=\int \bm{x}p_t(x,y) dxdy, \\
\Sigma_t&:=\int (\bm{x}-\average{\bm{x}_t})(\bm{x}-\average{\bm{x}_t})^Tp_t(x,y)dxdy, \label{eq:vcov}
}
where the superscript $T$ is transpose.

We assume that the initial distribution of $(x,y)$ is Gaussian. Since the Langevin equation \eqref{eq:general2} is linear, the probability distribution at any time remains Gaussian,  and is written as
\balign{
p_t(x,y)=\frac{1}{2\pi\sqrt{\det\Sigma_t}}\exp\left[ -\frac{1}{2}(\bm{x}-\average{\bm{x}_t})^T\Sigma^{-1}(\bm{x}-\average{\bm{x}_t}) \right], \label{eq:pxy}
}
which leads to the marginal distributions
\balign{
p_t(x)&=\frac{1}{\sqrt{2\pi\Sigma_t^{11}}}\exp\left[ -\frac{(x-\average{x_t})^2}{2\Sigma_t^{11}} \right], \label{eq:px}\\
p_t(y)&=\frac{1}{\sqrt{2\pi\Sigma_t^{22}}}\exp\left[ -\frac{(y-\average{y_t})^2}{2\Sigma_t^{22}} \right]. \label{eq:py}
}

We then consider the time evolution of $\average{\bm{x}_t}$ and $\Sigma_t$.  We first note that
\balign{
\dot{\average{\bm{x}_t}}=\int \bm{x}\del_t p_t(x,y)dxdy,
}
which leads to
\balign{
\dot{\average{\bm{x}_t}}=A_t\average{\bm{x}_t}+\bm{b}_t. \label{eq:ave}
}
The Riccati equation for $\Sigma_t$ is given by
\balign{
\dot{\Sigma}_t=A_t\Sigma_t+\Sigma_tA^T_t+D, \label{eq:ricc}
}
where we defined
\balign{
D:=
\begin{pmatrix}
2T_t^X &0 \\
0 &2T_t^Y
\end{pmatrix}.
}

We now list the analytical expressions, which will be derived later.
First, the s-TE from $X$ to $Y$ and vice versa are given by
\balign{
&\str=\frac{(a_t^{21})^2}{4T_t^Y}\frac{\det \Sigma_t}{\Sigma_t^{22}}, \label{eq:trxyt}\\
&\overline{T}_{Y\to X}=\frac{(a_t^{12})^2}{4T_t^X}\frac{\det \Sigma_t}{\Sigma_t^{11}}. \label{eq:tryxt}
}
Second, the learning rates are given by
\balign{
&l_{X}=a_t^{12}\frac{\Sigma^{12}_t}{\Sigma^{11}_t}-T_t^X\frac{\Sigma_t^{22}}{\det\Sigma_t}+\frac{T_t^X}{\Sigma_t^{11}}, \label{eq:aaa}\\
&l_{Y}=a_t^{21}\frac{\Sigma^{12}_t}{\Sigma^{22}_t}-T_t^Y\frac{\Sigma_t^{11}}{\det\Sigma_t}+\frac{T_t^Y} {\Sigma_t^{11}}. \label{eq:bbb}
}
Finally, the entropy productions of the subsystems are given by \par
\balign{
\sigma_{X}=&a^{11}_t+a_t^{12}\frac{\Sigma^{12}_t}{\Sigma^{11}_t}+\frac{T^X_t}{\Sigma^{11}_t}+\frac{(a_t^{11})^2}{T_t^X}\Sigma^{11}_t+\frac{(a_t^{12})^2}{T_t^X}\Sigma_t^{22} \notag\\
&+\frac{2a_t^{11}a_t^{12}}{T_t^X}\Sigma^{12}_t+\frac{(a_t^{11})^2}{T_t^X}\average{x_t}^2+\frac{(a_t^{12})^2}{T_t^X}\average{y_t}^2 \notag\\
&+\frac{2a_t^{11}a_t^{12}}{T_t^X}\average{x_t}\average{y_t}+\frac{2a_t^{11}b_t^1}{T_t^X}\average{x_t}+\frac{2a_t^{12}b_t^1}{T_t^X}\average{y_t}\notag \\
&+\frac{(b^1_t)^2}{T_t^X}+a_t^{11}, \label{eq:ccc}\\
\sigma_{Y}=&a^{22}_t+a_t^{21}\frac{\Sigma^{12}_t}{\Sigma^{22}_t}+\frac{T^Y_t}{\Sigma^{22}_t}+\frac{(a_t^{22})^2}{T_t^Y}\Sigma^{22}_t+\frac{(a_t^{21})^2}{T_t^Y}\Sigma_t^{11}\notag \\
&+\frac{2a_t^{22}a_t^{21}}{T_t^Y}\Sigma^{12}_t+\frac{(a_t^{22})^2}{T_t^Y}\average{y_t}^2+\frac{(a_t^{21})^2}{T_t^Y}\average{x_t}^2 \notag\\
&\ \ \ +\frac{2a_t^{22}a_t^{21}}{T_t^Y}\average{y_t}\average{x_t}+\frac{2a_t^{22}b_t^2}{T_t^Y}\average{y_t}+\frac{2a_t^{21}b_t^2}{T_t^Y}\average{x_t}\notag \\
&+\frac{(b^2_t)^2}{T_t^Y}+a_t^{22}. \label{eq:ddd}
}
Among them, the s-TE has already been derived in Ref.~\cite{Ito2016}. Therefore, we only derive the learning rates \eqref{eq:aaa}, \eqref{eq:bbb} and the entropy productions \eqref{eq:ccc}, \eqref{eq:ddd}. 

The general expression of the learning rates are given by~\cite{Horowitz2014,Rosinberg2016}
\balign{
l_{X}=\int J^X_t\del_x\ln \frac{p_t(x,y)}{p_t(x)p_t(y)}dxdy, \label{eq:lx} \\
l_{Y}=\int J^Y_t\del_y\ln \frac{p_t(x,y)}{p_t(x)p_t(y)}dxdy.
}
By using Eqs.~\eqref{eq:pxy}, \eqref{eq:px}, \eqref{eq:py}, we have
\balign{
\del_x\ln \frac{p_t(x,y)}{p_t(x)p_t(y)}=&-\frac{\Sigma_t^{22}}{\det\Sigma_t}(x-\average{x_t})+\frac{\Sigma_t^{12}}{\det\Sigma_t}(y-\average{y_t}) \notag \\
&+\frac{1}{\Sigma_t^{11}}(x-\average{x_t}).
}
\begin{widetext}
Substituting this into \eref{eq:lx}, we obtain
\balign{
l_{X}&=\int \left[\left(a_t^{11}x+a_t^{12}y+b^1_t\right)p_t(x,y)-T^X_t\del_x p_t(x,y)\right]\left[-\frac{\Sigma_t^{22}(x-\average{x_t})}{\det\Sigma_t}+\frac{\Sigma_t^{12}(y-\average{y_t})}{\det\Sigma_t}+\frac{(x-\average{x_t})}{\Sigma_t^{11}}\right]dxdy \notag \\
&=-\frac{\Sigma_t^{22}}{\det\Sigma_t}\left\{ a_t^{11}\Sigma_t^{11}+a_t^{12}\Sigma_t^{12} \right\}+\frac{\Sigma_t^{12}}{\det\Sigma_t}\left\{ a_t^{11}\Sigma_t^{12}+a_t^{12}\Sigma_t^{22} \right\}-T_t^X\frac{\Sigma_t^{22}}{\det\Sigma_t} \notag \\
&\ \ \ +\frac{1}{\Sigma_t^{11}}\left\{ a_t^{11}\Sigma_t^{11}+a_t^{12}\Sigma_t^{12} \right\}+\frac{T_t^X}{\Sigma_t^{11}} \notag \\
&=a_t^{12}\frac{\Sigma^{12}_t}{\Sigma^{11}_t}-T_t^X\frac{\Sigma_t^{22}}{\det\Sigma_t}+\frac{T_t^X}{\Sigma_t^{11}}, \label{eq:lxt}
}
which is \eref{eq:aaa}.
\end{widetext}
In the same manner, we obtain \eref{eq:bbb}. In the steady state, the above result reduces to a simpler form obtained in Ref.~\cite{Horowitz2014}.

We next consider the entropy productions of the subsystems. In Langevin systems, the entropy production is generally rewritten as \cite{Sekimoto}
\balign{
\sigma_{X}=\dot{S}_{X}-\dot{Q}_{X}, \label{eq:abcd}
}
where $S_{X}:=\langle -\ln p(x)\rangle$ is the Shannon entropy of $X$, and $\dot{Q}_{X}$ is the heat flux defined by
\balign{
\dot{Q}_{X}=\frac{\average{(\xi_t^X-\dot{x}_t)\circ \dot{x}_t}}{T^X_t}. \label{eq:abcd2}
}
Here, $\circ$ represents the Stratonovich product \cite{Gardiner}.

We first compute the Shannon entropy in \eref{eq:abcd}. It is straightforward to see that
\balign{
S_{X}=\frac{1}{2}\ln 2\pi \Sigma_t^{11}, \label{eq:S}
}
which leads to
\balign{
\dot{S}_{X}=\frac{1}{2}\frac{\dot{\Sigma}^{11}_t}{\Sigma^{11}_t}.
}
By considering the $(1,1)$ component of \eref{eq:ricc}, we have
\balign{
\frac{\dot{\Sigma}^{11}_t}{\Sigma^{11}_t}=2a_t^{11}+2a^{12}_t\frac{\Sigma^{12}_t}{\Sigma^{11}_t}+\frac{2T_t^X}{\Sigma^{11}_t}.
}
Combining this to \eref{eq:S}, we obtain 
\balign{
\dot{S}_{X}=a^{11}_t+a_t^{12}\frac{\Sigma^{12}_t}{\Sigma^{11}_t}+\frac{T^X_t}{\Sigma^{11}_t}. \label{eq:Sdot}
}

\begin{widetext}

We next consider the heat term \eqref{eq:abcd2}. By transforming the Stratonovich product in \eref{eq:abcd2} to the It$\hat{{\rm o}}$ form, we obtain
\balign{
&\average{(\xi_t^X-\dot{x}_t)\circ \dot{x}_t} \notag\\
&=\average{(-a_t^{11}x_t-a_t^{12}y_t-b_t^1)\circ (a_t^{11}x_t+a_t^{12}y_t+b_t^1+\xi_t^X)} \notag \\
&=-(a_t^{11})^2(\Sigma^{11}_t+\average{x_t}^2)-2a^{11}_ta^{12}_t(\Sigma_t^{12}+\average{x_t}\average{y_t})-2a_t^{11}b_t^1\average{x_t}-(a_t^{12})^2(\Sigma^{22}_t+\average{y_t}^2) \notag\\
&\ \ \ -2a_t^{12}b_t^1\average{y_t}-(b^1_t)^2-a_t^{11}T^X_t.
}
Therefore, we obtain
\balign{
\dot{Q}_{X}&=-\frac{(a_t^{11})^2}{T_t^X}\Sigma^{11}_t-\frac{(a_t^{12})^2}{T_t^X}\Sigma_t^{22}-\frac{2a_t^{11}a_t^{12}}{T_t^X}\Sigma^{12}_t-\frac{(a_t^{11})^2}{T_t^X}\average{x_t}^2-\frac{(a_t^{12})^2}{T_t^X}\average{y_t}^2 \notag \\
&\ \ \ -\frac{2a_t^{11}a_t^{12}}{T_t^X}\average{x_t}\average{y_t}-\frac{2a_t^{11}b_t^1}{T_t^X}\average{x_t}-\frac{2a_t^{12}b_t^1}{T_t^X}\average{y_t}-\frac{(b^1_t)^2}{T_t^X}-a_t^{11}. \label{eq:Q}
}

Combining Eq.\eqref{eq:Sdot} and Eq.\eqref{eq:Q}, we obtain \eref{eq:ccc}. In the same manner, we obtain \eref{eq:ddd}.
\end{widetext}

\section{Example of non-stationary dynamics}

\begin{figure*}[htbp]
\begin{center}
\includegraphics[scale=0.5]{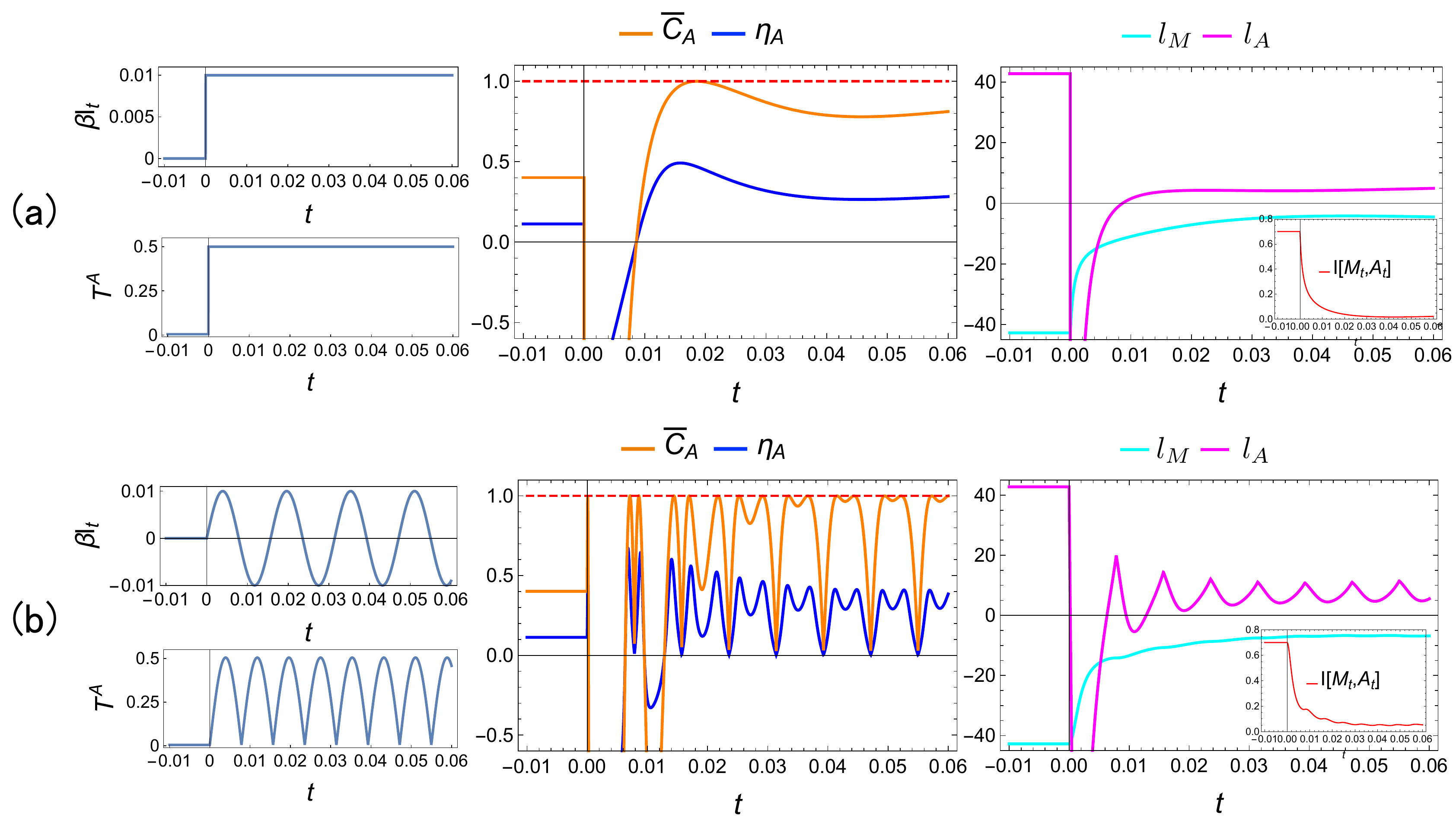} 
\end{center}
\caption{\label{fig:graph}Non-steady dynamics under the time-dependent ligand density and noise (left).  (Middle) The s-SC $\overline{C}_A$ and the information-thermodynamic efficiency $\eta_A$. (Right) The learning rates $l_M$, $l_M$, and the instantaneous mutual information (inset). The initial state is the steady state under $\beta l_t=0$, $T_t^A=0.1$. The external driving of $l_t$ is switched on at $t=0$, where the functional forms at $t>0$ are given by (a) step function: $\beta_t=0.01,\ T_t^A=0.5,$ (b) sinusoidal function: $\beta l_t=0.01\sin(400t),\ T_t^A=0.1+0.4|\sin(400t)|$. The other parameters are the same as in Fig.~\ref{fig:capy}. Since it is difficult to experimentally determine the functional forms of $l_t$ and $T_t^A$, our choice of them is rather arbitrary from the viewpoint of real biology.  The behaviors of $\bar{C}_A$ and $\eta_A$ can easily be changed if we adopt a different choice of the functional form of $T_t^A$.}
\end{figure*}
In this appendix, we consider non-stationary dynamics of the model of {\it E. coli} chemotaxis. As in Sec.~V, we focus on $\overline{C}_A$ and $\eta_A$ of the kinase activity, while 
they can become negative in non-stationary states.

We can analytically compute the s-TE, the learning rate, and the entropy production for the general linear Langevin equation \eqref{eq:general} in non-stationary states. The explicit forms of these quantities and their derivations have already been discussed in Appendix B.

By using these analytical expressions,  we consider non-stationary dynamics of $\overline{C}_A$ and $\eta_A$ for the model of {\it E. Coli} chemotaxis \eqref{eq:ecoli}. Figure 5 shows two types of dynamics of these quantities, where the time evolutions of the external ligand density $l_t$ are given by (a) step and (b) sinusoidal functions. Here, fluctuations of the ligand density is supposed to be incorporated into the noise term $\xi_t^A$, where $l_t$ represents the average of the ligand density.   Since the ligand fluctuation is expected to increase when its average increase, we have adopted the same functional form for of both $l_t$ and $T_t^A$ (here we introduced notation $T_t^A$ for representing the time-dependent noise). We remark, however, that the following results depends on the choice of the functional form of $T_t^A$, and thus should not be regarded as a general feature of {\it E. Coli} chemotaxis from the real biological point of view.

In \fref{graph} (a), both of $\overline{C}_A$ and $\eta_A$ rapidly decrease after $t=0$, and then come close to their optimal values under the constraint of the trade-off relation \eqref{eq:claim4}.

We have also shown the learning rates $l_M$ and $l_A$ (the rightmost panel of Fig.~5). The sum of these learning rates is the time derivative of the instantaneous mutual information (inset). We note that $l_M$ remains negative throughout the dynamics.

The trade-off relation \eqref{eq:claim4} between $\overline{C}_A$ and $\eta_A$ is applicable only to steady states.  However,  in the non-steady dynamics in Fig.~\ref{fig:graph}, $\eta_A$ stays around or below of the half of $\overline{C}_A$.  We can then expect that there might exist a trade-off relation even in non-stationary states, which is an open problem.


\end{document}